\newtheorem{theorem}{Theorem}
\begin{document}

\begin{titlepage}
	\vspace*{\fill}
	\begin{center}
		{\huge \textbf{Statistical Analysis of Binary Functional Graphs of the Discrete Logarithm}}\\[0.4cm]
		By\\[0.4cm]
		\large\textbf{Mitchell Orzech}\\[0.2cm]
		\large\textbf{Rose-Hulman Institute of Technology}\\[0.2cm]
		\large\textbf{May 19, 2016}
	\end{center}
	\vspace{1in}
	\vspace*{\fill}
\end{titlepage}

\pagestyle{plain}
\setcounter{page}{1}

\begin{abstract}
	The increased use of cryptography to protect our personal information makes us want to understand the security of cryptosystems. The security of many cryptosystems relies on solving the discrete logarithm, which is thought to be relatively difficult. Therefore, we focus on the statistical analysis of certain properties of the graph of the discrete logarithm. We discovered the expected value and variance of a certain property of the graph and compare the expected value to experimental data. Our finding did not coincide with our intuition of the data following a Gaussian distribution given a large sample size. Thus, we found the theoretical asymptotic distributions of certain properties of the graph.
\end{abstract}

\begin{section}{Introduction}

	With the increase in computational power continuing to rise, it has become important to understand the security of cryptosystems used everyday in order to ensure that the privacy of individuals is protected during bank transactions or private communications. Many of these cryptosystems rely on the common mapping
	\begin{equation} \label{discExp}
		x \mapsto g^x \text{ mod } n
	\end{equation}
	where the function takes the set $\{1, \ldots, n-1\} \mapsto \{1, \ldots, n-1\}$ and the gcd$(g,n) = 1$. Functions that resemble this type of mapping are known as discrete exponentiation mappings and they are known to exist in cryptographic schemes such as ElGamal encryption, Diffie-Hellman key exchange, and the Digital Signature Algorithm. Additionally, it is known that if $n$ is a prime and $g$ is a primitive root modulo $n$, then a map of the form in \eqref{discExp} has an inverse known as the \textit{discrete logarithm}. Computing the discrete logarithm appears relatively difficult, which is why many cryptosystems use \eqref{discExp} in their encryption schemes.

	When analyzing the security of cryptographic schemes, we want to ensure that the adversary cannot obtain the secret information used in the schemes in a method better than at random. Therefore, when looking at the graphs underlying the schemes, it makes sense to compare them to a random graph of the same form. Many statistics for random graphs are described in \cite{FO}. In this paper, we will focus on the statistics of binary functional graphs of \eqref{discExp}. We will focus on statistics pertaining to the rho length of these graphs, as tail and cycle length have been discussed in \cite{NL}. We conjecture and then prove that the average rho length is the sum of the average tail and cycle lengths found in \cite{NL}. We find the variance of the average rho length of a random binary functional graph. Then we gather experimental data on multiple binary functional graphs of \eqref{discExp} for various values of $n$. The results we obtain from the data lead us to look at the distributions of the different sizes of cycle, tail, and rho lengths, for which we get asymptotic results.

\end{section}

\begin{section}{Terminology and Background}
	Throughout this paper, we will let $p$ stand for an odd prime. The mappings we focus on are
	$$f: S = \{1, \ldots, p - 1\} \mapsto S$$
	of the form \eqref{discExp} where $n = p$ and gcd$(g,p) = 1$. Our focus will be binary functional graphs of \eqref{discExp}. A functional graph is a directed graph where each node has exactly one edge emanating from it. An $m$-ary functional graph is a functional graph where each node has exactly 0 or $m$ edges coming into it. A binary functional graph is an $m$-ary functional graph with $m = 2$. One can easily see that by the definition of binary functional graphs, the number of nodes with 0 edges coming into it is the same as the number of nodes with 2 edges coming into it. As described in \cite{DCJH}, we have a theorem that explains the relationship of $g$ to binary functional graphs.
	\begin{theorem}
		Let $p$ be a fixed prime and let $m$ be any positive integer that divides $p - 1$. Then as $g$ ranges from $1$ to $p - 1$, there are $\phi((p-1)/m)$ different functional graphs which are $m$-ary produced by graphs of the form \eqref{discExp}. Futhermore, if $r$ is any primitive root modulo $p$ and $g \equiv r^a$ mod $p$, then the values of $g$ that produce an $m$-ary graph are those where gcd$(a,p-1) = m$.
	\end{theorem}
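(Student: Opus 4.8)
The plan is to translate the purely graph-theoretic condition of being $m$-ary into a condition on the multiplicative order of $g$ modulo $p$, and then to count. First I would fix $g$ with $\gcd(g,p)=1$ and let $d=\operatorname{ord}_p(g)$, so that $d\mid p-1$. Since $g^{x}\equiv g^{x'}\pmod p$ precisely when $x\equiv x'\pmod d$, and since the set $\{1,\dots,p-1\}$ has cardinality a multiple of $d$ and therefore contains exactly $(p-1)/d$ integers in each residue class modulo $d$, I would conclude that the image of $f$ is the full cyclic subgroup $\langle g\rangle$ (the value $g^{0}=1$ being attained at $x=d$) and that every vertex lying in $\langle g\rangle$ has in-degree exactly $(p-1)/d$, while every other vertex has in-degree $0$. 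Because $f$ is automatically a functional graph, this shows that $f$ is $m$-ary if and only if $(p-1)/d=m$, i.e.\ if and only if $\operatorname{ord}_p(g)=(p-1)/m$.

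For the ``furthermore'' clause, write $g\equiv r^{a}\pmod p$ for a fixed primitive root $r$; then the standard identity $\operatorname{ord}_p(r^{a})=(p-1)/\gcd(a,p-1)$ immediately converts the condition $\operatorname{ord}_p(g)=(p-1)/m$ into $\gcd(a,p-1)=m$, which is exactly the claimed description of the admissible exponents.

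Finally, for the enumeration: as $g$ runs over $\{1,\dots,p-1\}$ the exponent $a$ runs over a complete residue system modulo $p-1$, so I need the number of residues $a$ with $\gcd(a,p-1)=m$. Writing $a=mb$ reduces this to counting $b$ modulo $(p-1)/m$ that are coprime to $(p-1)/m$, which is $\phi((p-1)/m)$; equivalently, this is the number of elements of order $(p-1)/m$ in the cyclic group $(\mathbb{Z}/p\mathbb{Z})^{\times}$. That these yield $\phi((p-1)/m)$ \emph{distinct} functional graphs is immediate, since evaluating $f$ at $x=1$ recovers $g$, so different admissible $g$ give different maps. I expect no serious obstacle: the only step requiring care is the in-degree bookkeeping in the first paragraph — checking that $\{1,\dots,p-1\}$ distributes evenly over the residue classes mod $d$ and that $1\in\langle g\rangle$ actually occurs as a value — and this is routine once one observes $d\mid p-1$.
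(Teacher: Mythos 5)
Your proof is correct. The paper itself does not prove this theorem --- it is quoted from the cited reference of Cloutier and Holden --- but your argument (in-degree of every image point equals $(p-1)/\mathrm{ord}_p(g)$, hence $m$-ary iff $\mathrm{ord}_p(g)=(p-1)/m$, then $\mathrm{ord}_p(r^a)=(p-1)/\gcd(a,p-1)$ and the count of such exponents is $\phi((p-1)/m)$) is exactly the standard one used there, and your distinctness observation via $f(1)=g$ closes the enumeration cleanly.
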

	In the above theorem $\phi$ is Euler's phi (or totient) function. By Theorem 1, we see that the values of $g$ that give us binary functional graphs are those that are squares of primitive roots modulo $p$. An example of a binary functional graph produced by \eqref{discExp} is given in Figure \ref{fig:graph}. Now that we know what characterizes a binary functional graph, we can look at tail, cycle, and rho lengths.

	\begin{figure}[t]
		\begin{center}
			\includegraphics[scale=0.6]{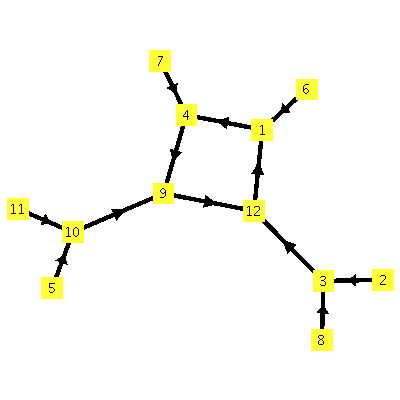}
		\end{center}
		\caption{A binary functional graph produced by \eqref{discExp} where $n = 13$ and $g = 4$.}
		\label{fig:graph}
	\end{figure}

	Let the ordered pair $(x, f(x))$, $x,f(x) \in S$, represent the traversal along an edge of the binary functional graph $f: S \mapsto S$ from node $x$ to node $f(x)$. Let $x_0$ be a random starting node in $f$. Since $f$ is a directed mapping and the cardinality of $S$ is $p - 1$, it follows from the pigeonhole principle that there must exist a point where $f^i(x_0) = f^j(x_0)$, $i \neq j$, after at most $p$ iterations. Suppose $i < j$. The tail length (as seen from node $x_0$), is the number of edges from $x_0$ to $f^i(x_0)$, which is $i$. The cycle length (as seen from node $x_0$) is the number of edges from $f^i(x_0)$ to itself, which is $j - i$. The rho length (as seen from a node $x_0$) is the sum of the tail and cycle length, which is $j$. Looking at Figure \ref{fig:graph}, we observe that node 5 sees a cycle length of 4, a tail length of 2, and a rho length of 6. Primarily we are concerned with the average and variance of these properties. If we take the sum of the cycle lengths, tail lengths, and rho lengths across every node, we get that the total cycle length is $12(4) = 48$, the total tail length is $4(0) + 4(1) + 4(2) = 12$, and the total rho length is $48 + 12 = 60$. Therefore, if we divide by the number of nodes, we get that the average cycle length is 4, the average tail length is 1, and the average rho length is 5. Also, the cycle length has variance 0, the tail length has variance 1, and the rho length has variance 1.

	We now state the major theorem of this paper:
	\begin{theorem} \label{expectedaveRL}
		The expected value for the rho length as seen from a random node in a random binary functional graph of size $n$ is the sum of the expected values for the tail and cycle length as seen from a random node in a random binary functional graph of size $n$.
	\end{theorem}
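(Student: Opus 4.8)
The plan is to reduce the statement to a pointwise identity on individual nodes, after which linearity of expectation finishes the job; the only real care needed is to pin down what ``a random node in a random binary functional graph of size $n$'' means and to check that the same probability space governs all three statistics.

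First I would fix the model, taking it to be the random binary functional graph model used for the tail and cycle lengths in \cite{NL} (the $m=2$ analogue of the uniform random mapping model of \cite{FO}) together with a uniformly chosen node $x$. For such a node $x$ in a binary functional graph $f$, the definitions of Section~2 say that if $i<j$ are minimal with $f^i(x)=f^j(x)$, then the tail, cycle, and rho lengths seen from $x$ are $\tau(x)=i$, $\gamma(x)=j-i$, and $\rho(x)=j$, so that
$$\rho(x)=\tau(x)+\gamma(x)$$
for every node $x$ of every binary functional graph. This identity is the crux; everything after it is bookkeeping.

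Next I would pass to expectations: with $(f,x)$ drawn from the model above, linearity of expectation immediately gives $\mathbb{E}[\rho(x)]=\mathbb{E}[\tau(x)]+\mathbb{E}[\gamma(x)]$, which is exactly the claim. If one prefers to argue inside the generating-function framework used for the tail and cycle lengths in \cite{NL}, the same thing happens one level up: the cumulative generating functions obtained by introducing a mark for $\rho$, $\tau$, or $\gamma$ and differentiating the corresponding bivariate generating function at mark value $1$ satisfy $\Xi_\rho(z)=\Xi_\tau(z)+\Xi_\gamma(z)$ coefficient by coefficient, precisely because $\rho$ is the sum of the other two node-parameters; normalizing by the number of size-$n$ objects, and then extracting asymptotics by singularity analysis if one wants the leading term, preserves the relation.

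The main obstacle is not the deduction but verifying its hypotheses. One must make sure that the expected tail and cycle lengths imported from \cite{NL} are taken in exactly this model --- uniform binary functional graph of size $n$, uniformly chosen node --- since averaging, say, cycle length over cyclic nodes only, or over the discrete-logarithm graphs rather than over all binary functional graphs, would break the additivity even though the pointwise identity still holds. It is also worth stating explicitly that each of the three expectations is a finite sum for every $n$ (the graph is finite), so the rearrangement is legitimate and the identity holds termwise in $n$, hence also for the asymptotic leading terms that are actually compared in the paper.
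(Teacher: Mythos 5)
Your proof is correct, and it is the argument the paper itself gestures at in the sentence following the theorem statement (``since expected value is a linear operation, Theorem~\ref{expectedaveRL} should hold''): by the definitions in Section~2 the rho length seen from a node is literally the sum of the tail and cycle lengths seen from that same node, so the pointwise identity $\rho(x)=\tau(x)+\gamma(x)$ together with linearity of expectation over the uniform distribution on (graph, node) pairs gives the result at once, and your insistence on checking that all three expectations are taken in the same model is exactly the right hypothesis to verify. The paper, however, takes a genuinely different route: it never invokes linearity directly, but instead builds the exponential generating function \eqref{aveRLGF} for the total rho length, extracts a closed form for the expected rho length (Theorem~\ref{aveRL}), quotes the closed forms for the expected tail and cycle lengths from \cite{NL} (Theorem~\ref{cltl}), and establishes Theorem~\ref{expectedaveRL} by comparing the explicit Gamma-function expressions. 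Your argument is shorter, model-independent, and in a sense the ``real'' reason the theorem holds; what the paper's computational detour buys is a consistency check that the bivariate generating function in \eqref{aveRLGF} correctly marks the rho length, which matters because that same generating function is immediately reused to compute the variance in Theorem~\ref{varaveRL}, where linearity of expectation no longer suffices. Your closing remark that the generating-function cumulants would also satisfy $\Xi_\rho(z)=\Xi_\tau(z)+\Xi_\gamma(z)$ coefficientwise is a nice bridge between the two viewpoints, though the paper does not verify the identity at that level.
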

	This idea comes from the fact that the rho length is the sum of the tail and cycle length, and since expected value is a linear operation, Theorem \ref{expectedaveRL} should hold. The reason that the expected value of the rho length is of importance is because it tells us when we expect to see our first repeated value from a random starting node. In terms of cryptography, this information allows us to predict on average how many ``messages" we need to see before we see a reuse of a key. We prove this theorem, along with other resulting facts, by techniques used in \cite{DCJH}, \cite{FO}, and \cite{NL} using exponential generating functions on graphs. An exponential generating function, $\eta(z)$, is a function that describes the sequence of real numbers $(\eta_0, \eta_1, \eta_2, \ldots)$ with the power series
	$$\eta(z)=\sum_{i=0}^\infty \frac{\eta_i}{i!} z^i.$$
	The generating functions of interest for binary functional graphs as stated in \cite{DCJH} and \cite{NL} are
	\begin{align*}
		f(z) &= e^{c(z)} = \frac{1}{1-zb(z)} \\
		c(z) &= \ln \frac{1}{1-zb(z)} \\
		b(z) &= z + \frac{1}{2} zb^2(z)
	\end{align*}
	where $f$ generates the number of binary functional graphs, $c$ generates the number of components, and $b$ generates the number of binary trees. Solving for $b(z)$ in the implicit equation, we get
	\begin{align}
		f(z) &= \frac{1}{\sqrt{1-2z^2}} \\
		c(z) &= \ln \frac{1}{\sqrt{1-2z^2}} \\
		b(z) &= \frac{1 - \sqrt{1-2z^2}}{z}.
	\end{align}
	It will be these values that we use to construct our generating functions for the properties of binary functional graphs we want to analyze. Another fact that will be needed is the number of binary functional graphs of size $n$. This will be used to normalize results to average over all binary functional graphs. For convenience later, we will use the result $g(n)$ found in \cite{NL} which is the number of binary functional graphs of size $n$ divided by $n!$:
	$$ g(n) = \frac{2^\frac{n}{2} \Gamma \left( \frac{n}{2} + \frac{1}{2} \right)}{ \sqrt{\pi} \Gamma \left( \frac{n}{2} + 1 \right) }. $$
	where $\Gamma(n) = \int_{x=0}^\infty e^{-x} x^{n-1} \, dx$. We will also need a function asymptotically equal to $g(n)$, which in \cite{DCJH} is found to be:
	$$ g^*(n) = \frac{2^\frac{n}{2} \sqrt{2}}{\sqrt{\pi n}}.$$ 
	Lastly, since recurrence relations appear in the proofs of some results, we define them here: A recurrence relation is an equation that defines a sequence of values where the next term in the sequence is defined as a function of some of the preceding terms.

\end{section}

\begin{section}{Theoretical Results of Rho Length}
	As stated above, the average rho length is a desired property for these graphs. The result is restated in its expression form in Theorem \ref{aveRL}.
	\begin{theorem} \label{aveRL}
		The expected value for the rho length as seen from a random node in a random binary functional graph of size $n$ is
		$$ \frac{(n+1)\left( n \sqrt{\pi} \Gamma\left( \frac{n}{2} \right) - 2\Gamma\left( \frac{n}{2} + \frac{1}{2} \right) \right)}{2 n \Gamma\left( \frac{n}{2} + \frac{1}{2} \right)}. $$
	\end{theorem}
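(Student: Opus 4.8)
The plan is to put the expected rho length into ordinary-coefficient form by assembling its cumulative exponential generating function, extracting the $n$-th coefficient, normalizing, and finally rewriting everything with the Gamma recurrence $\Gamma(x+1)=x\Gamma(x)$. Let $R(z)=\sum_n (R_n/n!)z^n$ be the cumulative EGF whose coefficient $R_n$ is the total rho length summed over every node of every binary functional graph of size $n$. Every such graph has $n$ nodes and there are $n!\,[z^n]f(z)=n!\,g(n)$ of them, so the quantity the theorem describes is $\mathbb{E}[\rho] = R_n/(n\cdot n!\,g(n)) = [z^n]R(z)/(n\,g(n))$. Thus the theorem reduces to (i) finding $R(z)$ in closed form and (ii) extracting $[z^n]R(z)$ and simplifying.

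For (i), I would use that rho $=$ tail $+$ cycle at the level of an individual node — this is exactly the fact behind Theorem \ref{expectedaveRL} — so that $R(z)=T(z)+C(z)$ for the cumulative EGFs of tail and cycle length. Reading $c(z)=\ln\frac{1}{1-zb(z)}$ in the standard way (a component is a cyclic sequence of objects, each object a cycle node carrying a hanging binary tree, with EGF $zb(z)$), I would get $C(z)$ by pointing a component and a cycle node inside it — pointing the cycle node turns the cyclic arrangement into a linear one, and tracking the component size introduces a $z\,d/dz$ — so that $C(z) = z\frac{d}{dz}\!\left(\frac{zb(z)}{1-zb(z)}\right)f(z)$; and I would get $T(z)$ by first computing the cumulative "total depth" EGF $\beta(z)$ of binary trees from the recursion $b(z)=z+\tfrac12 zb(z)^2$, then marking one object of a component with weight $\beta$, giving $T(z) = z\beta(z)/(1-zb(z))^2$. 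Everything then collapses via the identities $zb(z)=1-\sqrt{1-2z^2}$ (hence $\frac{zb(z)}{1-zb(z)}=f(z)-1$) and $1+z^2b'(z)=f(z)$: substituting, $C(z)=2z^2/(1-2z^2)^2$, $T(z)=\left(1-\sqrt{1-2z^2}\right)/(1-2z^2)^2$, and therefore $R(z) = \frac{1+2z^2}{(1-2z^2)^2} - \frac{1}{(1-2z^2)^{3/2}}$. (One could instead bypass this derivation, invoke Theorem \ref{expectedaveRL} with the exact tail and cycle expectations from \cite{NL}, and go straight to the algebra.)

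For (ii), since binary functional graphs exist only for even order I would extract coefficients with $n=2m$: from $\frac{1}{(1-2z^2)^2}=\sum_k(k+1)2^kz^{2k}$ one finds $[z^n]\frac{1+2z^2}{(1-2z^2)^2}=(n+1)2^{n/2}$, and from $\frac{1}{(1-2z^2)^{3/2}}=zf'(z)+f(z)=\sum_n(n+1)g(n)z^n$ one finds $[z^n]\frac{1}{(1-2z^2)^{3/2}}=(n+1)g(n)$, so $[z^n]R(z)=(n+1)(2^{n/2}-g(n))$. Dividing by $n\,g(n)$ and substituting $g(n)=\frac{2^{n/2}\Gamma(n/2+1/2)}{\sqrt\pi\,\Gamma(n/2+1)}$, the power of $2$ cancels and $\Gamma(n/2+1)=\tfrac n2\Gamma(n/2)$ turns $2^{n/2}/g(n)$ into $\frac{n\sqrt\pi\,\Gamma(n/2)}{2\,\Gamma(n/2+1/2)}$; clearing denominators yields precisely the displayed expression. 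A check at $n=2$ or $n=4$ confirms the normalization.

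The step I expect to be the obstacle is (i): setting up $C(z)$ and $T(z)$ correctly, in particular the pointing/marking bookkeeping for cyclic sequences and the auxiliary total-depth EGF $\beta(z)$ for binary trees, together with spotting the two algebraic identities ($\frac{zb}{1-zb}=f-1$ and $1+z^2b'=f$) that make the closed forms collapse. Given $R(z)$, step (ii) is routine coefficient extraction and elementary Gamma manipulation.
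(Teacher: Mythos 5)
Your proposal is correct, and I verified the key computations: $zb(z)=1-\sqrt{1-2z^2}$ gives $C(z)=\tfrac{2z^2}{(1-2z^2)^2}$ and $T(z)=\tfrac{1-\sqrt{1-2z^2}}{(1-2z^2)^2}$, whose coefficients $\tfrac{n}{2}2^{n/2}$ and $(\tfrac{n}{2}+1)2^{n/2}-(n+1)g(n)$ reproduce the cycle and tail expectations of Theorem \ref{cltl} after dividing by $ng(n)$, and their sum $(n+1)(2^{n/2}-g(n))$ yields exactly the displayed formula (e.g.\ at $n=2$ both give $3/2$). However, your route is genuinely different from the paper's. The paper builds a single bivariate generating function $\Xi(z)$ that marks the entire rho path (unmarked components, unmarked trees in the marked component, the node where the tail meets the cycle, and the marked tail edges via $\beta(z,u)$), differentiates at $u=1$, and then relies on computer algebra --- \texttt{gfun}'s \texttt{holexprtodiffeq} and \texttt{diffeqtorec} followed by \texttt{rsolve} --- to pass from the generating function to a differential equation, a recurrence, and finally the closed form $u(k)=2^k(1+2k)-\tfrac{2^{k+1}\Gamma(3/2+k)}{\sqrt{\pi}\,\Gamma(k+1)}$, which it then normalizes by $ng(n)$ just as you do. You instead exploit additivity $R(z)=T(z)+C(z)$, obtain explicit univariate closed forms, and extract coefficients by hand from binomial series plus the identity $\tfrac{1}{(1-2z^2)^{3/2}}=zf'(z)+f(z)$. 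Your approach is more elementary and fully checkable without Maple, and it re-derives the \cite{NL} tail and cycle results as byproducts; what it gives up is twofold. First, the paper's marked generating function is reused (with a second $u$-differentiation) to get the variance in Theorem \ref{varaveRL}, and your decomposition does not produce that bivariate object. Second, the paper deliberately treats Theorem \ref{aveRL} as an \emph{independent} confirmation of Theorem \ref{expectedaveRL}; your main route is not circular (you derive $T$ and $C$ from scratch), but it builds the additivity in from the start, and your parenthetical shortcut --- summing the \cite{NL} formulas directly --- would reduce the theorem to Theorem \ref{expectedaveRL} plus Theorem \ref{cltl} and forfeit that confirmation entirely. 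The one place you would need to supply real detail is the total-depth generating function $\beta(z)$ for binary trees (the depth-shift bookkeeping in the recursion $b=z+\tfrac12 zb^2$), which you correctly flag as the main obstacle.
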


	\begin{proof}
		As in \cite{DCJH} and \cite{FO}, the generating function needs to be defined with a parameter $u$ to mark the nodes of interest. Let $\Xi(z)$ be the generating function for the total rho length over all binary functional graphs. Then $\Xi(z)$ can be defined as
		\begin{equation}\label{aveRLGF}
			\Xi(z) = \frac{\partial}{\partial u} \left[ \frac{1}{1-zb(z)} \frac{1}{1-uzb(z)} uz \left( \frac{ub(z)}{1-uzb(z)} + b(z) \right) \right]_{u = 1}.
		\end{equation}
		In \eqref{aveRLGF}, $\frac{1}{1-zb(z)}$ is for the unmarked components, $\frac{1}{1-uzb(z)}$ is for the unmarked trees in the marked component, $uz$ is for marking the node where the tail meets the cycle, $\frac{ub(z)}{1-uzb(z)}$ is for making edges along one tail and is derived from the equation
		$$ \beta(z,u) = z + \frac{1}{2} z b^2(z) + uzb(z) \beta(z,u)$$
		from \cite{DCJH}, and $b(z)$ is for the nodes with no tail lengths. Then using a package called \texttt{gfun} in Maple, we find a differential equation satisfied by $\Xi(z)$ by using the function \texttt{holexprtodiffeq($\Xi(z)$, $y(z)$)}. When we apply this function, we get the differential equation
		\begin{equation} \label{diffeqAveRL}
			-4z^3 + 6z + (24z^5-24z^3+6z)y(z) + (8z^6-12z^4+6z^2-1)y'(z)=0, \quad y(0) = 0.
		\end{equation}
		We then acquire a recurrence relation whose solutions are the coefficients to the solution to \eqref{diffeqAveRL} by using the function \texttt{diffeqtorec(\eqref{diffeqAveRL}, $y(z)$, $u(n)$)} in \texttt{gfun}, resulting in
		\begin{gather*}
			0 = (24 + 8n)u(n) + (-48-12n)u(n+2) + (6n+30)u(n+4)+(-n-6)u(n+6), \\
			u(0) = u(1) = u(3) = u(5) = 0, \\
			u(2) = 3, \\
			u(4) = \frac{25}{2}.
		\end{gather*}
		This recurrence relation for $u(n)$ can be simplified by letting $n = 2k$, resulting in
		\begin{gather*}
			0 = (24 + 16k)u(k) + (-48-24k)u(k+1)+(12k+30)u(k+2)+(-2k-6)u(k+3), \\
			u(0) = 0, \\
			u(1) = 3, \\
			u(2) = \frac{25}{2}.
		\end{gather*}
		Solving this recurrence relation for $u(k)$ in Maple using \texttt{rsolve()} results in
		\begin{equation} \label{ksolAveRL}
			u(k) = 2^k(1 + 2k) - \frac{2^{k+1} \Gamma\left(\frac{3}{2} + k\right)}{\sqrt{\pi} \Gamma(k+1)}.
		\end{equation}
		Equation \eqref{ksolAveRL} then needs to be normalized. First it must be multiplied by $n!$ to get the correct parameter value $c$ since $\frac{c}{n!}$ is the coefficient of $z^n$ in $\Xi(z)$.
		Equation \eqref{ksolAveRL} also needs to be divided by the total number of binary functional graphs to get the expected value, and by $n$ to get the result from a random node in the graph. Since $g(n)$ is already divided by $n!$, the result for Theorem \ref{aveRL} can be obtained by taking the solution to the recurrence relation and dividing by $ng(n)$. Since we transformed the recurrence relation from $u(n)$ to $u(k)$, we first need to transform back to $u(n)$ by letting $k = \frac{n}{2}$ in $u(k)$. Then the result of Theorem \ref{aveRL} is given by 
		$$\frac{u(n)}{ng(n)}.$$
	\end{proof}

	To compare Theorem \ref{aveRL} with Theorem \ref{expectedaveRL}, we first need to state the results for average cycle and tail lengths found in \cite{NL}:

	\begin{theorem} \label{cltl}
		The expected values for the cycle length and tail length as seen from a random node in a random binary functional graph of size $n$ are
		\begin{align*}
			\text{Cycle Length:}& \qquad \frac{\sqrt{\pi} \Gamma(\frac{n}{2} + 1)}{2 \Gamma(\frac{n}{2} + \frac{1}{2})} \\
			\text{Tail Length:}& \qquad \frac{\sqrt{\pi} \Gamma(\frac{n}{2} + 2) - n \Gamma(\frac{n}{2} + \frac{1}{2}) - \Gamma(\frac{n}{2} + \frac{1}{2})}{n \Gamma(\frac{n}{2} + \frac{1}{2})}.
		\end{align*}
	\end{theorem}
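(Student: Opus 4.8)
The plan is to reuse the generating-function machinery from the proof of Theorem \ref{aveRL}, running it once for cycle length and once for tail length. In both cases the target quantity is ``the sum, over every node of every binary functional graph of size $n$, of the statistic in question,'' which after division by $n\,g(n)$ (number of nodes times number of graphs) becomes the desired expectation; only the bivariate generating function carrying the marker $u$ changes.

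For cycle length I would exploit that all nodes of a component see the same cycle length, so the total cycle length over all nodes equals $\sum_{\text{components}}(\text{size})\cdot(\text{cycle length})$. Combinatorially a binary functional graph with a distinguished node is a set of unmarked components (factor $\tfrac{1}{1-zb(z)}$) together with one component carrying the distinguished node; writing that component as a cycle of binary trees, marking its cycle nodes by $u$, and pointing at a node to recover the size factor yields a bivariate series $\Theta(z,u)$, and $\Xi_{\mathrm c}(z)=\partial_u\Theta(z,u)\big|_{u=1}$ has $n!\,[z^n]\Xi_{\mathrm c}(z)$ equal to the total cycle length. For tail length I would instead mark, for the distinguished node, the length of the forward path from it down to the cycle; this is exactly what the auxiliary equation
$$\beta(z,u) = z + \tfrac{1}{2}zb^2(z) + uzb(z)\,\beta(z,u)$$
from \cite{DCJH} (already used in the proof of Theorem \ref{aveRL}) tracks, so $\Xi_{\mathrm t}(z)$ is assembled from $\beta$, $b$, and the unmarked-components factor in the same style as \eqref{aveRLGF} and again differentiated at $u=1$. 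Both $\Xi_{\mathrm c}$ and $\Xi_{\mathrm t}$ can be obtained by restricting the marker $u$ in \eqref{aveRLGF} to, respectively, the cycle-related and tail-related factors, reflecting that the rho-path of a node is its tail followed by its cycle.

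From there the steps are mechanical and identical in form to the proof of Theorem \ref{aveRL}: feed each series to \texttt{gfun}'s \texttt{holexprtodiffeq} to get a linear ODE with polynomial coefficients, convert with \texttt{diffeqtorec} to a recurrence for the coefficients, substitute $n=2k$ to clear the parity, solve the resulting low-order linear recurrences with \texttt{rsolve} to obtain closed forms in $2^k$ and $\Gamma$-functions, substitute back $k=n/2$, divide by $g(n)$ to pass to an expectation and by $n$ to pass to a random node, and simplify with $\Gamma(x+1)=x\Gamma(x)$ until the two displayed formulas appear. As a joint consistency check I would add the cycle-length and tail-length formulas and verify, again via $\Gamma(x+1)=x\Gamma(x)$, that the sum collapses to the formula of Theorem \ref{aveRL} --- which is precisely the assertion of Theorem \ref{expectedaveRL}.

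The step I expect to be the real obstacle is not the computer algebra but writing down the two bivariate generating functions correctly: one must get the pointing that produces the size factor for cycle length and the recursive tail structure for tail length exactly right (including the bookkeeping convention, fixed by \cite{NL}, of whether the node where tail meets cycle is counted toward the tail or the cycle), and one must check that the resulting expressions are holonomic so that \texttt{gfun} applies. The remaining difficulty is purely clerical: tracking the $n!$, the division by $g(n)$, and the division by $n$ through the normalization, and then the $\Gamma$-function manipulations needed to reach the exact forms stated in the theorem.
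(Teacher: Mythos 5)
Your plan is essentially the approach the paper itself takes for the analogous Theorem~\ref{aveRL} (mark the statistic with $u$, differentiate at $u=1$, pass through \texttt{gfun} to a recurrence, and normalize by $n\,g(n)$); the paper does not reprove Theorem~\ref{cltl} but simply quotes it from \cite{NL}, where this same generating-function method is used. Your proposed consistency check --- that the two formulas sum, via $\Gamma(x+1)=x\Gamma(x)$, to the expression in Theorem~\ref{aveRL} --- does in fact work out, and is exactly the verification the paper invites the reader to perform.
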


	The reader can confirm Theorem \ref{expectedaveRL} by taking the sum of the results from Theorem \ref{cltl} and comparing to the result of Theorem \ref{aveRL}. When finding the expected value of a result, we usually also want the associated variance. The result for the variance of the average rho length is listed in Theorem \ref{varaveRL}.
	\begin{theorem}\label{varaveRL}
		The variance of the average rho length of a random binary functional graph of size $n$ is
		$$-\frac{ n^4 \pi \Gamma \left( \frac{n}{2} \right)^2 + 2 n^3 \pi \Gamma \left( \frac{n}{2} \right)^2 + 2 n^3 \sqrt{\pi} \Gamma \left( \frac{n}{2} \right) \Gamma \left( \frac{n}{2} + \frac{1}{2} \right) + n^2 \pi \Gamma \left( \frac{n}{2} \right)^2 + 2n^2 \sqrt{\pi} \Gamma \left( \frac{n}{2} \right) \Gamma \left( \frac{n}{2} + \frac{1}{2} \right)}{4 n^2 \Gamma \left(\frac{n}{2} + \frac{1}{2} \right)^2}$$
		$$-\frac{-2n^3 \Gamma \left( \frac{n}{2} + \frac{1}{2} \right)^2 - n \sqrt{\pi} \Gamma \left( \frac{n}{2} \right) \Gamma \left( \frac{n}{2} + \frac{1}{2} \right) - 6 n^2 \Gamma \left( \frac{n}{2} + \frac{1}{2} \right)^2 - 3 n \Gamma \left( \frac{n}{2} + \frac{1}{2} \right)^2 + \Gamma \left( \frac{n}{2} + \frac{1}{2} \right)^2}{n^2 \Gamma \left( \frac{n}{2} + \frac{1}{2} \right)^2}$$
	\end{theorem}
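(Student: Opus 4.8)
The plan is to run the same generating-function pipeline used for Theorem \ref{aveRL}, but to compute the \emph{second} moment of the rho length and then subtract the square of the first moment. Write $\Phi(z,u)$ for the bracketed bivariate generating function appearing in \eqref{aveRLGF}, so that, by construction, the exponent of $u$ in $\Phi(z,u)$ tracks the rho length seen from the marked node and $\frac{\partial}{\partial u}\Phi(z,u)\big|_{u=1}=\Xi(z)$ already generates the total rho length over all (graph, node) pairs. Since applying $\partial/\partial u$ once and setting $u=1$ extracts $\sum_k k\,a_{n,k}$ from a series $\sum_{n,k}a_{n,k}u^k z^n$, applying it twice extracts the falling-factorial sum $\sum_k k(k-1)a_{n,k}$. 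Hence I would set $\Theta(z):=\frac{\partial^2}{\partial u^2}\Phi(z,u)\big|_{u=1}+\Xi(z)$, the exponential generating function whose coefficient of $z^n$, multiplied by $n!$, equals $\sum(\text{rho length})^2$ summed over every node of every binary functional graph of size $n$.

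First I would form $\Theta(z)$ explicitly from the closed forms of $b(z)$ and $f(z)$, then feed it to \texttt{gfun}: \texttt{holexprtodiffeq} to obtain a linear ODE with polynomial coefficients satisfied by $\Theta(z)$, \texttt{diffeqtorec} to convert this into a recurrence for the coefficients, and — exactly as in the proof of Theorem \ref{aveRL} — the substitution $n=2k$ to collapse it (only even powers of $z$ occur, because $zb(z)$ and $f(z)$ are even) into a lower-order recurrence in $k$ with a handful of explicit initial values. Solving that recurrence with \texttt{rsolve} should yield a closed form in terms of $2^k$ and ratios of Gamma functions; substituting $k=n/2$ and dividing by $n\,g(n)$ (the $n!$ is already absorbed since $g(n)$ is the normalized count, dividing by $g(n)$ averages over graphs, and dividing by $n$ selects a random node) produces $\mathbb{E}[R^2]$, the second moment of the rho length seen from a random node in a random graph of size $n$.

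Finally I would assemble $\mathrm{Var}(R)=\mathbb{E}[R^2]-\big(\mathbb{E}[R]\big)^2$, taking $\mathbb{E}[R]$ from Theorem \ref{aveRL}, place everything over the common denominator $4n^2\,\Gamma(\tfrac n2+\tfrac12)^2$ using the shift identities $\Gamma(\tfrac n2+1)=\tfrac n2\,\Gamma(\tfrac n2)$ and $\Gamma(\tfrac n2+\tfrac32)=(\tfrac n2+\tfrac12)\,\Gamma(\tfrac n2+\tfrac12)$ to bring the two Gamma ratios into a common normalization, and collect terms to match the displayed two-fraction expression.

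The main obstacle I anticipate is not conceptual but the size of the symbolic computation: the ODE and recurrence coming from the second $u$-derivative are of higher order and much bulkier than those in Theorem \ref{aveRL}, the \texttt{rsolve} output mixes polynomial-in-$n$ terms with two distinct Gamma ratios, and reducing $\mathbb{E}[R^2]-\mathbb{E}[R]^2$ to the stated form requires careful bookkeeping to cancel the $2^{n}$-type growth and land on a rational-function-times-Gamma-ratio answer. A secondary point to verify is that the pointing/marking construction in \eqref{aveRLGF} genuinely assigns weight $(\text{rho length})^2$ per node after the double differentiation — in particular that the marked tail factor $ub(z)/(1-uzb(z))$ contributes exactly one power of $u$ per tail edge — and that the even/odd structure is handled so that passing to the variable $k$ introduces no spurious initial conditions.
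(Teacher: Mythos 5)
Your proposal is correct and follows essentially the same route as the paper: form a second-moment generating function from the bivariate $\Phi(z,u)$ of \eqref{aveRLGF}, push it through the \texttt{gfun} pipeline (ODE, recurrence, $n=2k$ reduction, \texttt{rsolve}), normalize by $n\,g(n)$, and subtract $\mu^2$. The only cosmetic difference is that you compute the second factorial moment $\partial^2_u\Phi|_{u=1}$ and add back $\Xi(z)$, whereas the paper applies the equivalent operator $\partial_u\bigl(u\,\partial_u\Phi\bigr)\big|_{u=1}$ to get $\sum k^2$ directly.
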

	\begin{proof}
		Using the fact that the variance of a set of data is $\frac{1}{N} \left( \sum_{i=1}^{N} x_i^2 \right) - \bar{x}^2$, where $x_i$ are the individual data points and $\bar{x}$ is the mean of the data, all we need is a generating function for $\sum_{i=1}^{N} x_i^2$ and the result from Theorem \ref{aveRL} to get the desired result of Theorem \ref{varaveRL}. The generating function for $\sum_{i=1}^N x_i^2$ can be found by using the technique described in \cite{NL}. In summary, the method is mark the nodes of interest with $u$, differentiate with respect to $u$, multiply by $u$ to correct for the power of $u$, differentiate with respect to $u$ again, and then plug in $u = 1$. This technique gives the desired value of the parameter squared.

		Let $\Xi^*(z)$ be the generating function for the total rho length squared. Using the same technique as in \cite{NL} and the generating function from the proof of Theorem \ref{aveRL}, we get
		$$\Xi^*(z) = \frac{\partial}{\partial u} \left\{ u \left[ \frac{\partial}{\partial u} \left( \frac{1}{1-zb(z)} \frac{1}{1-uzb(z)} uz \left( \frac{ub(z)}{1-uzb(z)} + b(z) \right) \right) \right] \right\}_{u = 1}. $$
		With the \texttt{gfun} package in Maple, we are able to find a differential equation satisfied by $\Xi^*(z)$ using the function \texttt{holexptodiffeq($\Xi^*(z)$, $y(z)$)}, resulting in
		\begin{gather} 
			8z^5 + 40z^3 + 50z + (48z^7 + 168z^5 - 204z^3+54z)y(z) \nonumber \\
			+ (16z^8 + 16z^6 - 48z^4 + 28z^2-5)y'(z), \label{diffeqVarAveRL} \\
			y(0) = 0. \nonumber
		\end{gather}
		We then acquire a recurrence relation whose solutions are the coefficients to the solution to \eqref{diffeqVarAveRL} by using the function \texttt{diffeqtorec(\eqref{diffeqVarAveRL}, $y(z)$, $u(n)$)} in \texttt{gfun}, resulting in
		\begin{gather*}
			0 = (48+16n)u(n)+(200+16n)u(n+2)+(-396-48n)u(n+4) \\
			+(222+28n)u(n+6)+(-5n-40)u(n+8), \\
			u(0) = u(1) = u(3) = u(5) = u(7) = 0, \\
			u(2) = 5, \\
			u(4) = \frac{59}{2}, \\
			u(6) = \frac{227}{2}.
		\end{gather*}
		This recurrence for $u(n)$ can be simplified by letting $n = 2k$, resulting in
		\begin{gather*}
			(48+32k)u(k)+(200+32k)u(k+1)+(-396-96k)u(k+2) \\
			+(222+56k)u(k+3)+(-10k-40)u(k+4), \\
			u(0) = 0, \\
			u(1) = 5, \\
			u(2) = \frac{59}{2}, \\
			u(3) = \frac{227}{2}.
		\end{gather*}
		Solving this recurrence relation for $u(k)$ in Maple using \texttt{rsolve()} results in
		\begin{equation} \label{ksolVarAveRL}
			u(k) = -2^k(5 + 6k)+\frac{2^{k+3} \Gamma\left(\frac{3}{2}+k\right) \left(k+\frac{5}{4}\right)}{\sqrt{\pi} \Gamma(k+1)}.
		\end{equation}
		Equation \eqref{ksolVarAveRL} then needs to be normalized. We need to divide by $n$ to get the average of the squared sums of the individual data required for the variance calculation, and by $g(n)$ to normalize by the number of graphs. We can ignore multiplying by $n!$ because $g(n)$ is already divided by $n!$, as described in the proof of Theorem 3. Therefore, dividing the solution to the recurrence relation by $ng(n)$ and subtracting the result for the mean from Theorem \ref{aveRL} squared gives the desired result. Before we can do that, we need to get the correct solution to the recurrence relation, $u(n)$. We get $u(n)$ by letting $k = \frac{n}{2}$ in $u(k)$. Then if we let $\mu$ be the mean value in Theorem \ref{aveRL}, the result of Theorem \ref{varaveRL} comes from
		$$\frac{u(n)}{ng(n)} - \mu^2.$$
	\end{proof}

	Once we have the expected value and variance for all binary functional graphs, we want to compare this to our particular graph \eqref{discExp} and see if this graph behaves like random binary functional graphs.

\end{section}

\begin{section}{Experimental Results}
	To compare the average rho length of binary functional graphs of \eqref{discExp} with the value obtained in Theorem \ref{aveRL}, we generate all $\phi \left( \frac{p-1}{2} \right)$ graphs. An individual data point in our sample is the result acquired from a graph of one prime across all $g$. Our null hypothesis for testing is that the sample behaves like a random sample from a normal distribution, and our alternative hypothesis is that the sample does not behave like a random sample from a normal distribution. In this paper, we will normalize each data point we obtain in the following fashion:
	$$\frac{(\bar{x} - \mu)\sqrt{n}}{s},$$
	where $\bar{x}$ is the experimental mean from one prime, $\mu$ is the theoretical mean from Theorem \ref{aveRL}, $n = \phi \left( \frac{p-1}{2} \right)$ is the number of trials, and $s$ is the experimental standard deviation from one prime. This is done to convert the data to a standard normal. We test the hypotheses by using two normality tests (Shapiro-Wilk and Anderson-Darling) and a normality plot to see if there is a resulting normal distribution after normalizing the data. Obtaining sample data for 600 consecutive primes starting with $100003$ and using modified code from \cite{DCJH} and \cite{NL} results in the following data for the average rho length: \\
	\begin{minipage}[t][4cm][c]{0.48\textwidth}
		\begin{center}
			Shapiro-Wilk normality test:
			$$\text{$p$-value} = 0.03197$$
			\begin{center}
				(reject at $\alpha = 0.05$)
			\end{center}
		\end{center}
	\end{minipage}
	\begin{minipage}[t][4cm][c]{0.48\textwidth}
		\begin{center}
			Anderson-Darling normality test:
			$$\text{$p$-value} = 0.00139$$
			\begin{center}
				(reject at $\alpha = 0.05$)
			\end{center}
		\end{center}
	\end{minipage}
	\begin{center}
		\includegraphics[scale=.55]{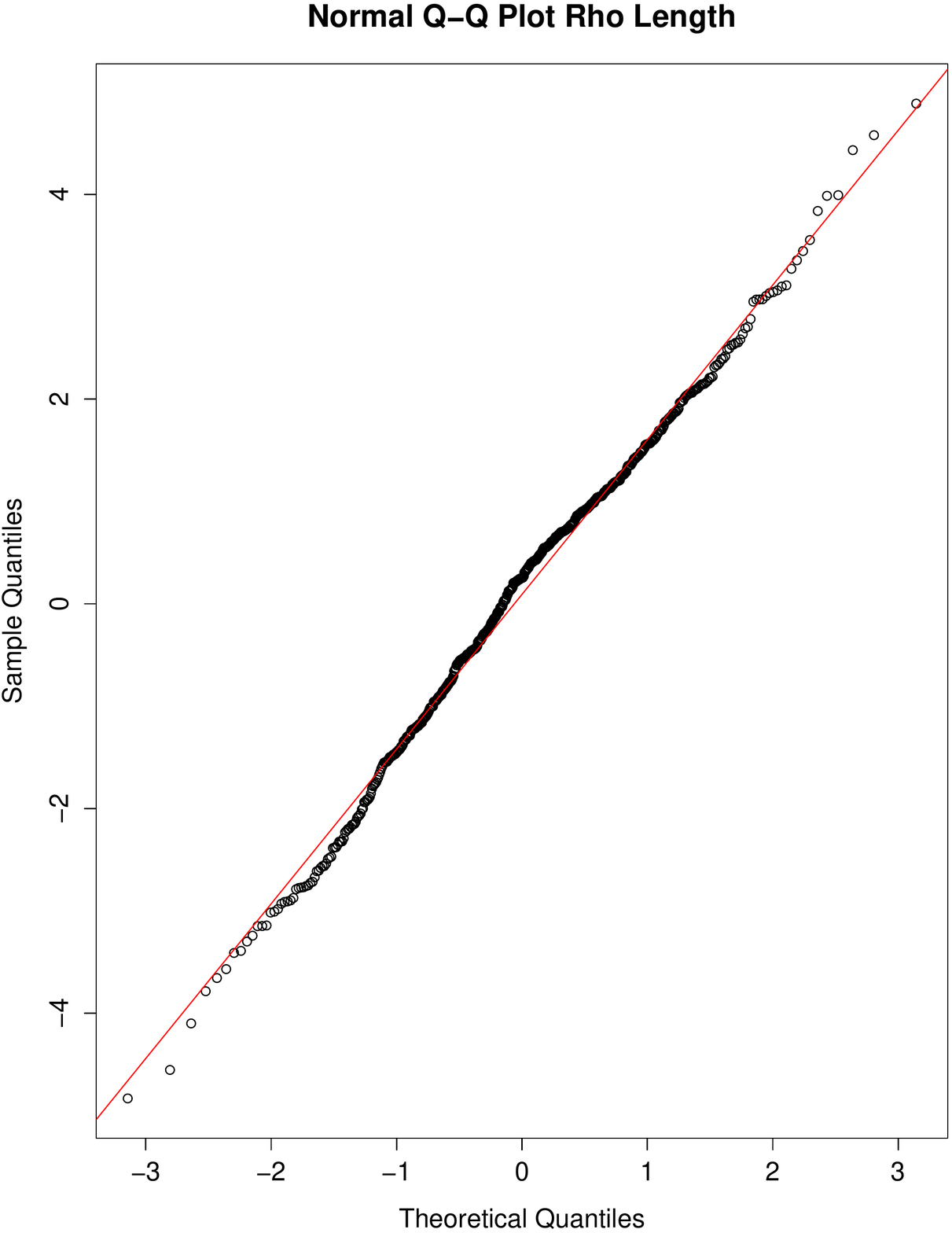}
	\end{center}

	As can be seen by the results, the normalized data does not follow a normal distribution. Therefore, we decided to look at the average cycle and tail lengths for these 600 primes; previous results were acquired in \cite{NL} on a much smaller set of primes. 
	For the average cycle length, we get \\
	\begin{minipage}[t][4cm][c]{0.48\textwidth}
		\begin{center}
			Shapiro-Wilk normality test:
			$$\text{$p$-value} = 0.5655$$
			\begin{center}
				(fail to reject at $\alpha = 0.05$)
			\end{center}
		\end{center}
	\end{minipage}
	\begin{minipage}[t][4cm][c]{0.48\textwidth}
		\begin{center}
			Anderson-Darling normality test:
			$$\text{$p$-value} = 0.6807$$
			\begin{center}
				(fail to reject at $\alpha = 0.05$)
			\end{center}
		\end{center}
	\end{minipage}
	\begin{center}
		\includegraphics[scale=.55]{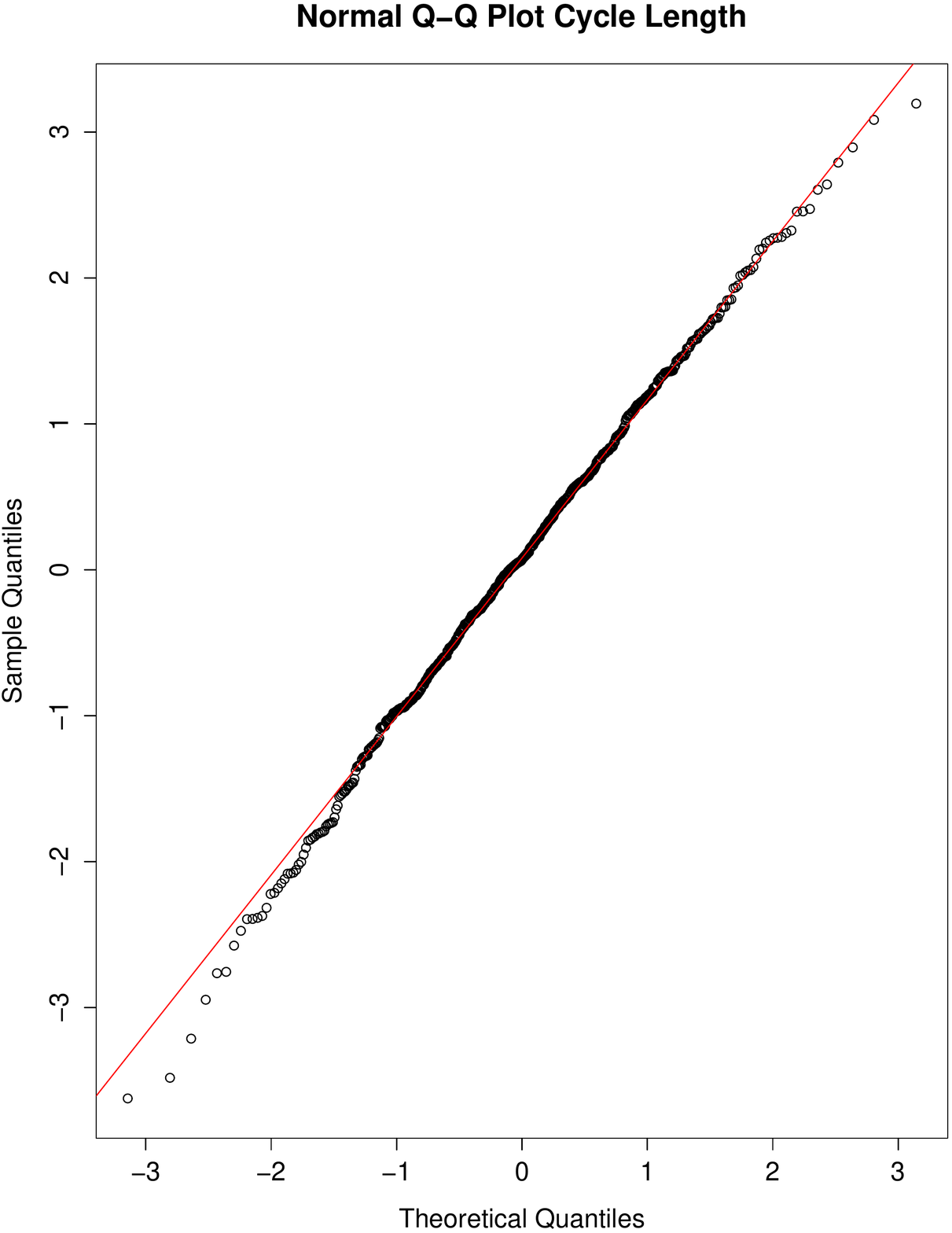}
	\end{center}
	\newpage
	For the average tail length, we get \\
	\begin{minipage}[t][4cm][c]{0.48\textwidth}
		\begin{center}
			Shapiro-Wilk normality test:
			$$\text{$p$-value} = 0.4075$$
			\begin{center}
				(fail to reject at $\alpha = 0.05$)
			\end{center}
		\end{center}
	\end{minipage}
	\begin{minipage}[t][4cm][c]{0.48\textwidth}
		\begin{center}
			Anderson-Darling normality test:
			$$\text{$p$-value} = 0.5529$$
			\begin{center}
				(fail to reject at $\alpha = 0.05$)
			\end{center}
		\end{center}
	\end{minipage}
	\begin{center}
		\includegraphics[scale=.55]{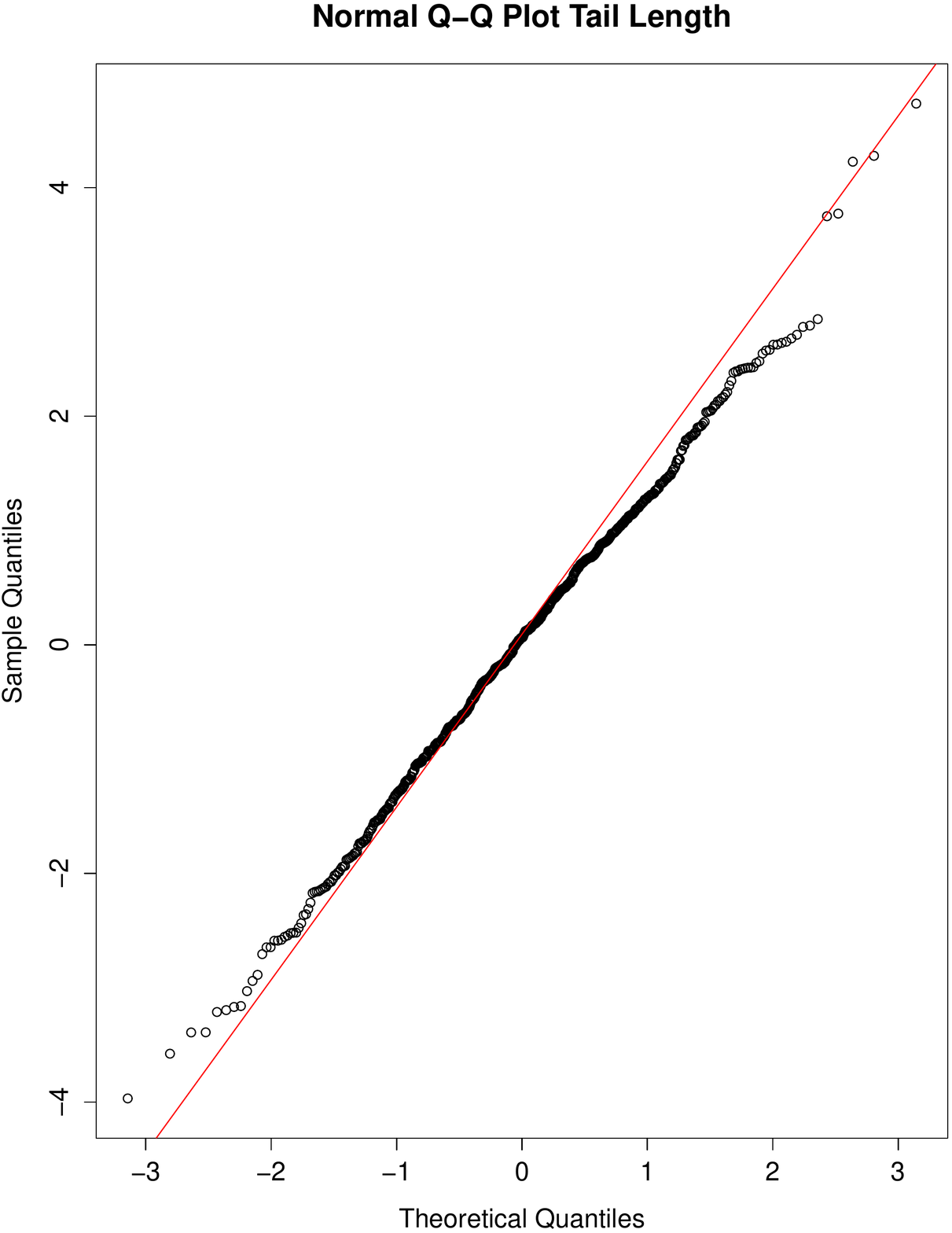}
	\end{center}

	Therefore, we see that the average cycle and tail lengths could be normally distributed after being normalized whereas the average rho length is not. Once we discovered this, we wanted to know more about their individual distributions. Thus, we stopped our experimental testing and moved towards finding asymptotic distributions for cycle, tail, and rho lengths.

\end{section}

\begin{section}{Theoretical Distributions}
	Implementing the same procedure as in \cite{FO}, we get the following asymptotic distributions:
	\begin{theorem} \label{distProps}
		For any fixed $r$, the parameters number of cycle lengths of size $r$, number of tail lengths of size $r$, and number of rho lengths of size $r$ have the asymptotic mean values:
		\begin{align*}
			\text{$r$-Cycle Lengths:}& \quad \sqrt{\frac{\pi n}{2}} - r \\
			\text{$r$-Tail Lengths:}& \quad \sqrt{\frac{\pi n}{2}} - r + 1 \\
			\text{$r$-Rho Lengths:}& \quad \quad r
		\end{align*}
	\end{theorem}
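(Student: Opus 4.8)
The plan is to adapt the method of \cite{FO} for random mappings. For each of the three parameters I will build an exponential generating function enumerating pairs (binary functional graph, distinguished node) in which the distinguished node has exactly the prescribed feature --- it sees a cycle of length $r$, it has tail length $r$, or it has rho length $r$ --- and then read off the mean as $[z^n]$ of that function divided by $g(n)$, the normalized number of binary functional graphs (which is asymptotic to $g^*(n)$). The one structural fact I will lean on is that a binary functional graph with one component singled out splits as (the singled-out component) $\times$ (an arbitrary binary functional graph on the remaining nodes); hence every marked generating function below is a product (component factor)$\,\cdot f(z)$ with $f(z)=1/(1-zb(z))=1/\sqrt{1-2z^2}$. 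I will also use $zb(z)=1-\sqrt{1-2z^2}$ and $c(z)=\sum_{j\ge 1}(zb(z))^j/j$ from Section~2 repeatedly.

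First I would write the three generating functions. A node sees cycle length $r$ exactly when its whole component has cyclic part of length $r$; the connected binary functional graphs with cycle length $r$ have generating function $C_r(z)=(zb(z))^r/r$, so pointing one of its nodes gives $\Xi^{\mathrm{cyc}}_r(z)=\bigl(z\,\partial_z C_r(z)\bigr)f(z)$. A node of tail length $r\ge 1$ sits at depth $r-1$ inside the binary tree hanging from some cyclic node; pointing that node turns one ``cyclic node $+$ subtree'' unit $zb(z)$ into $(zb(z))^r$, and running around the cycle gives $\Xi^{\mathrm{tail}}_r(z)=(zb(z))^r/(1-zb(z))^2$. A node has rho length $r$ exactly when its tail length $t$ and its component's cycle length $c$ satisfy $t+c=r$ with $c\ge 1$; the marked-component generating function for a fixed admissible pair is $(zb(z))^{\,c+t-1}=(zb(z))^{\,r-1}$ when $t\ge 1$ and $(zb(z))^{r}$ when $t=0$, so summing over $t=0,1,\dots,r-1$ yields $\Xi^{\mathrm{rho}}_r(z)=\bigl[(zb(z))^{r}+(r-1)(zb(z))^{r-1}\bigr]f(z)$. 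In each case the asymptotic mean is $\lim_{n\to\infty}[z^n]\Xi_r(z)/g(n)$.

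Next I would extract the coefficients. Each of these functions is algebraic with its unique dominant singularity at $z=1/\sqrt 2$, so I would substitute $\delta=\sqrt{1-2z^2}$ (note $zb(z)=1-\delta$, $f(z)=\delta^{-1}$) and expand near $\delta=0$; equivalently, in the style of the earlier proofs one can feed each $\Xi_r$ to \texttt{gfun} to obtain a holonomic ODE, convert it to a recurrence, and solve with \texttt{rsolve}. The crucial point is the order of growth at the singularity. Both $\Xi^{\mathrm{cyc}}_r$ and $\Xi^{\mathrm{tail}}_r$ are of order $\delta^{-2}=1/(1-2z^2)$, whose $n$-th coefficient is $2^{n/2}$ and which, normalized by $g(n)\sim 2^{n/2}/\sqrt{\pi n/2}$, gives the leading term $\sqrt{\pi n/2}$; the next term in the $\delta$-expansion is of order $\delta^{-1}$, i.e.\ of order $g(n)$, and its coefficient is linear in $r$, producing the additive corrections $-r$ and $-r+1$ in the statement. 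By contrast $\Xi^{\mathrm{rho}}_r$ is itself only of order $\delta^{-1}$, with $\Xi^{\mathrm{rho}}_r(z)\sim\bigl(1+(r-1)\bigr)\delta^{-1}=r\,f(z)$ near the singularity, so $[z^n]\Xi^{\mathrm{rho}}_r(z)\sim r\,g(n)$ and the mean tends to the constant $r$.

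The step I expect to be the main obstacle is twofold. First, the combinatorial bookkeeping for the rho-length generating function: one must enumerate, for every split $r=t+c$ with $c\ge 1$, the connected binary functional graphs with a length-$c$ cycle carrying a marked node at tree-distance $t$, verify the collapse to $(zb(z))^{r-1}$ for $t\ge 1$, and treat the boundary cases $t=0$ (a cyclic marked node) and $c=1$ (a self-loop) correctly. Second, the singularity analysis must be carried one order past the leading term --- the dominant term alone recovers only $\sqrt{\pi n/2}$ --- so obtaining the $O(1)$ corrections for the cycle and tail parameters requires expanding $(1-\delta)^{r}$ to the next order and controlling the error term, most cleanly via the transfer theorems of \cite{FO}, or else by solving the \texttt{gfun}-produced recurrence in closed form and letting $n\to\infty$, exactly as in the proofs of Theorems \ref{aveRL} and \ref{varaveRL}.
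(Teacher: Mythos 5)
Your strategy is the same as the paper's: write each parameter's marked exponential generating function as (marked component)$\,\cdot f(z)$ with $f(z)=1/(1-zb(z))$, extract coefficient asymptotics at $z=1/\sqrt 2$, and normalize by $g^*(n)$. Your rho-length function $\Xi^{\mathrm{rho}}_r(z)=\bigl[(zb(z))^r+(r-1)(zb(z))^{r-1}\bigr]f(z)$ is correct: writing $w=zb(z)=1-\delta$ it equals $\bigl[(1-\delta)^r+(r-1)(1-\delta)^{r-1}\bigr]\delta^{-1}\sim r\,\delta^{-1}$, so the mean tends to $r$. It is in fact the right reading of the paper's formula, which as displayed places the $(u-1)$ correction terms inside an extra factor of $1/(1-zb(z))$ and would therefore be of order $\delta^{-2}$, yielding a mean of order $\sqrt{n}$; the paper's verbal description and its reported coefficient $r\,2^{n/2}\sqrt2/\sqrt{\pi n}=r\,g^*(n)$ show the intended function is yours.

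The gap is in the cycle- and tail-length cases, at precisely the step you identified as the main obstacle and then did not carry out. You assert that the $\delta^{-1}$ terms of your $\Xi^{\mathrm{cyc}}_r$ and $\Xi^{\mathrm{tail}}_r$ produce the corrections $-r$ and $-r+1$ ``in the statement,'' but expanding your own functions gives the two constants in the opposite assignment. Indeed $\Xi^{\mathrm{cyc}}_r=\bigl(z\partial_z\frac{w^r}{r}\bigr)f(z)=(1-\delta)^r(1+\delta)\,\delta^{-2}=\delta^{-2}+(1-r)\delta^{-1}+O(1)$, giving mean $\sqrt{\pi n/2}-r+1$, while $\Xi^{\mathrm{tail}}_r=w^r/(1-w)^2=(1-\delta)^r\delta^{-2}=\delta^{-2}-r\,\delta^{-1}+O(1)$, giving mean $\sqrt{\pi n/2}-r$; that is Theorem \ref{distProps} with the two constants transposed. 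Your functions are not simply wrong by an algebra slip: summing them as $\sum_r r\,\Xi^{\mathrm{cyc}}_r=\delta^{-4}-\delta^{-2}$ and $\sum_r r\,\Xi^{\mathrm{tail}}_r=\delta^{-4}-\delta^{-3}$ reproduces exactly the expected cycle and tail lengths of Theorem \ref{cltl}. So you must either locate an off-by-one in your marked units under the paper's edge-counting definitions --- e.g., decide whether the distinguished cycle-node-plus-tree unit containing a node of tail length $t$ contributes $w^{t}$ or $w^{t+1}$, and likewise whether the pointed cycle unit is $z\partial_z(zb(z))$ or something shifted --- or conclude that the theorem's two constants are swapped relative to the generating functions the paper takes from its references. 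As written, your proof silently claims agreement with the statement at the one point where your own formulas disagree with it, so the cycle and tail cases are not yet established.
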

	\begin{proof}
		Since the method is the same for proving each result, we will focus our attention on proving $r$-rho lengths since we already have presented the generating function for generating the total rho length over all binary functional graphs. The generating functions for cycle and tail lengths of binary functional graphs can be found in \cite{DCJH}.

		Let $\delta(z)$ be the exponential generating function for the $r$-rho lengths over all binary functional graphs. Then $\delta(z)$ can be written as
		\begin{align*}
			\delta(z) &= \frac{\partial}{\partial u} \left\{ \frac{1}{1-zb(z)}\left[ \frac{1}{1-zb(z)} \left( \frac{zb(z)}{1-zb(z)} + zb(z) \right. \right. \right. \\
					  &\quad \left. \left. \left. {}+ (u-1)(r-1)(zb(z))^{r-1} + (u-1)(zb(z))^r \vphantom{\frac12} \right) \vphantom{\frac12} \right] \vphantom{\frac12} \right\}_{u=1}.
		\end{align*}
		Above, $(u-1)(r-1)(zb(z))^{r-1}$ comes from the expansion of the product $\frac{1}{1-zb(z)}\left(\frac{zb(z)}{1-zb(z)}\right)$, and $(u-1)(zb(z))^r$ comes from the expansion of the product $\frac{1}{1-zb(z)}(zb(z))$. The reason for the $r-1$ expressions in $(u-1)(r-1)(zb(z))^{r-1}$ is due to the fact that we mark the node where the tail meets the cycle.
		We then use the function \texttt{equivalent($\delta(z)$, $z$, $n$, 1)} in Maple from the package \texttt{algolib}\footnote{http://algo.inria.fr/libraries/} to convert the generating function into a first order asymptotic form of coefficients in terms of $n$, resulting in
		$$\frac{r 2^\frac{n}{2} \sqrt{2}}{\sqrt{ \pi n}}.$$
		The reason we used this approach is because we could not find a closed form differential equation that was satisfied by $\delta(z)$. Next we need to normalize this result by multiplying by $n!$ to get the correct parameter value and then dividing by the number of binary functional graphs. We want to divide by the asymptotic form of the number of binary functional graphs since we have an asymptotic approximation. Therefore, the desired normalized result in Theorem \ref{distProps} can be found by dividing by $g^*(n)$ since it is the asymptotic total number of binary functional graphs divided by $n!$. (Note: For cycle and tail length, we found a second order asymptotic form of coefficients since it was where the first occurrence of an $r$ term appeared.)
	\end{proof}

	This result matched what we were seeing; we expect to see many small cycle and tail lengths, but when it comes to a small rho length there are only a few ways it can be achieved. For example, the possible ways to see a rho length of 3 are a cycle length of 3, a cycle length of 2 and a tail length of 1, and a cycle length of 1 and a tail length of 2.

\end{section}

\begin{section}{Discussion and Future Work}
	From the linearity of expected value, we knew the average rho length of binary functional graphs using the results for average cycle and tail lengths in \cite{NL}. In this paper, we were able to prove the result using an exponential generating function, confirming that we had the correct generating function when it came to finding the variance in the average rho length. In \cite{NL}, it was found that the normalized average cycle and tail lengths appeared to follow normal distributions. Therefore, we investigated the normality for the normalized average rho length and found that we reject normality for both Shapiro-Wilk and Anderson-Darling normality tests on the normalized data. This led us to re-evaluate whether normalized average cycle and tail lengths were distributed normally on our larger data set. There are other normality tests that could be used, especially ones that possibly put more of a bias on outliers. The conflict in results led us to finding the asymptotic distributions to the average cycle, tail, and rho lengths, which matched our conjectures from small examples.

	There is much future work to be done in analyzing binary functional graphs. One thing would be to look at the normalized experimental variances for the average rho lengths and compare to the results found in \cite{NL} for the average cycle and tail lengths, which showed that the experimental results did not coincide with the theoretical results. We believe this is due to the fact that for a given cycle length $c$ in a binary functional graph, there are at least $2c$ nodes that see that cycle length. Similarly, there are at least $c$ nodes that see a tail length of one. A possible way to get around these discrepancies would be to sample one random node from all possible graphs and then measure the variance. 
	
	It would also be beneficial to test the experimental distributions against the asymptotic distributions. In this paper, we took second order asymptotic approximations for the cycle and tail lengths and first order for the rho length; it is not known how well these asymptotics correlate with the actual data. 
	
	Another interesting property to look at would be the component size. Since only one cycle exists in any component of a particular graph under a certain $g$, we know the number of nodes to see the cycle length of that component is at least the component size of that component; there may be more components that have the same cycle length. Therefore, there is definitely some correlation between component size and cycle length as seen from a node.

\end{section}

\begin{section}{Acknowledgements}
	I would like to thank Dr. Josh Holden, my advisor, for his constant guidance and excellent advice throughout this thesis. I would also like to thank Dr. Eric Reyes for all his guidance and help when it came to using R and interpreting the statistical results.
\end{section}

\newpage
\thispagestyle{empty}

\end{document}